\documentclass{article}

\usepackage{footnote}
\usepackage[top=3cm,left=3cm,right=3cm,bottom=3cm]{geometry}
\usepackage{complexity}
\usepackage[top=3cm,left=3cm,right=3cm,bottom=3cm]{geometry}
\usepackage[intlimits]{amsmath}
\usepackage{amsfonts,amssymb,amsthm,bbm, mathtools}
\usepackage[colorlinks=true]{hyperref}
\usepackage{verbatim}
\usepackage{enumerate}
\usepackage[vlined,ruled, linesnumbered]{algorithm2e}
\usepackage{xcolor}

\usepackage{thmtools}
\usepackage{thm-restate}




\hypersetup{pdfstartview=FitH}

\newtheorem{theorem}{Theorem}

\newtheorem{fact}[theorem]{Fact}

\theoremstyle{definition}
\newtheorem{definition}{Definition}

\newtheorem{proposition}[theorem]{Proposition}
\newtheorem{claim}[theorem]{Claim}

\newclass{\alt}{alt}
\newclass{\s}{s}
\newclass{\bs}{bs}
\newclass{\fbs}{fbs}
\newclass{\fC}{fC}
\newclass{\Cert}{C}
\newclass{\EC}{EC}
\newclass{\SB}{sb}
\newclass{\fsb}{fsb}
\newclass{\sC}{sC}
\newclass{\qa}{q_{adv}}
\newclass{\disc}{disc}
\newclass{\bias}{bias}
\newclass{\snip}{snip}
\newclass{\Snip}{Snip}
\newclass{\codim}{codim}

\newcommand{\clB}{\mathcal{B}}
\newcommand{\clC}{\mathcal{C}}

\newcommand{\email}[1]{\href{mailto:#1}{#1}}

\setlength{\parindent}{0cm}
\setlength{\parskip}{0.2cm}

\allowdisplaybreaks
\title{A Composition Theorem for Randomized Query complexity}
\author{Anurag Anshu \thanks{Centre for Quantum Technologies, National University of Singapore, Singapore. \email{a0109169@u.nus.edu}} \and 
Dmitry Gavinsky \thanks{Institute of Mathematics, Czech Academy of Sciences, \v Zitna 25, Praha 1, Czech Republic. Part of this work was done when Dmitry Gavinsky was visiting the Centre for Quantum Technologies at the National University of Singapore.} \and
Rahul Jain \thanks{Centre for Quantum Technologies, National University of Singapore and MajuLab, UMI 3654, Singapore. \email{rahul@comp.nus.edu.sg}} \and 
Srijita Kundu  \thanks{Centre for Quantum Technologies, National University of Singapore, Singapore. \email{srijita.kundu@u.nus.edu}} \and 
Troy Lee \thanks{Division of Mathematical Sciences, Nanyang Technological University, Singapore and Centre for Quantum Technologies, National University of Singapore, Singapore. \email{troyjlee@gmail.com}} \and
Priyanka Mukhopadhyay \thanks{Centre for Quantum Technologies, National University of Singapore, Singapore. \email{a0109168@u.nus.edu}}  \and
Miklos Santha \thanks{IRIF, Universit\'e Paris Diderot, CNRS, 75205 Paris, France, and Centre for Quantum Technologies, National University of Singapore, Singapore. \email{santha@irif.fr}
}\and
Swagato Sanyal \thanks{Division of Mathematical Sciences, Nanyang Technological University, Singapore and Centre for Quantum Technologies, National University of Singapore, Singapore. \email{ssanyal@ntu.edu.sg}}}

\begin{document}
\maketitle
\begin{abstract}
Let the randomized query complexity of a relation for error probability $\epsilon$ be denoted by $\R_\epsilon(\cdot)$. We prove that for any relation $f \subseteq \{0,1\}^n \times \mathcal{R}$ and Boolean function $g:\{0,1\}^m \rightarrow \{0,1\}$,  $\R_{1/3}(f\circ g^n) = \Omega(\R_{4/9}(f)\cdot\R_{1/2-1/n^4}(g))$, where $f \circ g^n$ is the relation obtained by composing $f$ and $g$. We also show using an XOR lemma that $\R_{1/3}\left(f \circ \left(g^\oplus_{O(\log n)}\right)^n\right)=\Omega(\log n \cdot \R_{4/9}(f) \cdot \R_{1/3}(g))$, where $g^\oplus_{O(\log n)}$ is the function obtained by composing the XOR function on $O(\log n)$ bits and $g$.
\end{abstract}
\section{Introduction}
Given two Boolean functions $f:\{0,1\}^n \rightarrow \{0,1\}$ and $g:\{0,1\}^m \rightarrow \{0,1\}$, the composed function $f\circ g^n:\left(\{0,1\}^m\right)^n \rightarrow \{0,1\}$ is defined as follows: For $x=(x^{(1)}, \ldots, x^{(n)}) \in \left(\{0,1\}^m\right)^n$, $f \circ g^n(x)=f(g(x^{(1)}), \ldots, g(x^{(n)}))$. Composition of Boolean functions has long been a topic of active research in complexity theory. In many works, composition of Boolean function is studied in the context of a certain complexity measure. The objective is to understand the relation between the complexity of the composed function in terms of the complexities of the individual functions.
Let $\D(\cdot)$ denote the deterministic query complexity. It is easy to see that $\D(f \circ g^n) \leq \D(f) \cdot \D(g)$ since $f\circ g$ can be computed by simulating an optimal query algorithm of $f$; whenever the algorithm makes a query, we simulate an optimal query algorithm of $g$ and serve the query. It can be shown by an adversary argument that this is an optimal query algorithm and $\D(f \circ g^n) = \D(f) \cdot \D(g)$.

However, such a characterization is not so obvious for randomized query complexity. Although a similar upper bound still holds true (possibly accommodating a logarithmic overhead), it is no more as clear that it also asymptotically bounds the randomized query complexity of $f\circ g^n$ from below. Let $\R_\epsilon(\cdot)$ denote the $\epsilon$-error randomized query complexity. Our main theorem in this work is the following.

\begin{restatable}[Main Theorem]{thm}{composing}
\label{thm:composing}
For any relation $f \subseteq \{0,1\}^n \times \mathcal{R}$ and Boolean function $g: \{0,1\}^m \to \{0,1\}$,
\[ \R_{1/3}(f\circ g^n) = \Omega(\R_{4/9}(f)\cdot\R_{1/2-1/n^4}(g)).\]
\end{restatable}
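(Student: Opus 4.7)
The plan is to prove the contrapositive by a simulation: given a randomized algorithm $A$ for $f\circ g^n$ with worst-case complexity $T$ and error $1/3$, I build a randomized algorithm $A'$ for $f$ of complexity $T/q$ and error at most $4/9$, where $q := \R_{1/2-1/n^4}(g)$. By Yao's minimax principle applied to $g$, fix a distribution $\mu$ on $\{0,1\}^m$ such that every deterministic decision tree of depth less than $q$ computes $g$ on $\mu$ with error at least $1/2 - 1/n^4$. We may assume $q \geq 1$ (else the theorem is vacuous); then the trivial zero-query majority predictor fails to achieve error $\leq 1/2-1/n^4$, so $\mu$ is nearly balanced, and a standard symmetrization (or an additive $O(1/n^4)$ loss in the analysis) lets us take it to be exactly balanced.

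On input $y\in\{0,1\}^n$, $A'$ simulates $A$ lazily: for each coordinate $i$ it maintains a counter $c_i$ of previous $x^{(i)}$-queries and the list of answers already supplied. When $A$ issues a query to $x^{(i)}_j$ with $c_i<q$, $A'$ samples the answer from the marginal of $\mu$ conditioned on the previous $x^{(i)}$-answers \emph{only} (not on $y_i$), and increments $c_i$. The first time $c_i$ would reach $q$, $A'$ queries $y_i$ from its own oracle; this and every later $x^{(i)}$-query is answered from $\mu$ conditioned on both the previous answers and $g(x^{(i)})=y_i$. Since $A'$ only queries $y_i$ after spending $q$ queries on $x^{(i)}$, and $\sum_i c_i\leq T$, pigeonhole bounds the total $y$-queries by $T/q$.

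For correctness, it suffices to show that the transcript of answers seen by $A$ in the simulation is within total variation $O(1/n^3)$ of the transcript of $A$ run on a genuine sample from $\mu^n$ conditioned on $g(x^{(i)})=y_i$ for every $i$. I proceed by a hybrid argument over coordinates: after fixing $A$'s random coins and the transcripts supplied for all coordinates $j\neq i$, the queries $A$ issues to $x^{(i)}$ form a deterministic decision tree of depth at most $q-1$, so the distribution of its leaf under $\mu$ versus under $\mu\mid g(x^{(i)})=y_i$ differs in total variation by at most $\tfrac{1}{2}\|\mu_{g=0}-\mu_{g=1}\|_{TV}$ restricted to that tree, which by the definition of $q$ and balancedness of $\mu$ is at most $1/n^4$. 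Summing the hybrid losses over the $n$ coordinates gives total variation at most $1/n^3$ between the simulated and true joint transcripts, so $A'$ errs with probability at most $1/3+1/n^3\leq 4/9$ for all sufficiently large $n$; small $n$ is trivial since then $\R_{4/9}(f)$ and $q$ are both $O(1)$.

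The main technical hurdle is precisely the adaptivity in this hybrid step: answers supplied from the ``wrong'' conditional distribution affect which queries $A$ subsequently issues, so the coupling between simulated and true executions must be set up carefully. The standard remedy is a telescoping sequence of intermediate distributions that correct one coordinate at a time, invoking the decision-tree reduction above at each step. Executing this reduction cleanly -- in particular showing that the TV losses at successive hybrids add rather than compound, and that the conditional decision tree extracted at each step truly has depth at most $q-1$ along every branch of $A$'s computation -- is where most of the work lies, while the high-level architecture is exactly the simulation described above.
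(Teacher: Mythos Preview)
Your simulation algorithm $A'$ is exactly the one the paper uses, and your query-complexity bound via pigeonhole is the same. The analyses, however, diverge.

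The paper applies Yao twice (once to $g$ to get $\mu$, once to $f\circ g^n$ to get a \emph{deterministic} tree $\mathcal{B}$) and then compares, leaf by leaf, the probability $p_\ell^z$ that $\mathcal{B}$ reaches $\ell$ under the true input distribution $\gamma^z$ with the probability $q_\ell^z$ that the simulation reaches $\ell$. Because a leaf fixes all answers in every block, both quantities factor over coordinates, and the paper obtains a \emph{multiplicative} comparison $q_\ell^z\ge(8/9)\,p_\ell^z$ on ``good'' leaves using a bias lemma (their Claim~\ref{unbias}), while a separate ``snip'' argument (their Claim~\ref{R->bias}) shows that leaves where the bias on some block becomes large before $q$ queries have total mass $O(1/n)$.

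Your route is an \emph{additive} hybrid over coordinates with a total-variation bound per step. This avoids the bias/snip machinery entirely and is in some sense cleaner: the whole error analysis reduces to the single fact that a depth-$(q{-}1)$ tree distinguishes $\mu$ from $\mu_{y_i}$ with TV at most $O(1/n^4)$, summed over $n$ coordinates. The trade-off is that the paper's per-leaf factorisation makes the adaptivity issue disappear automatically, whereas in your hybrid it is the main obstacle---and your sentence ``after fixing \ldots the transcripts supplied for all coordinates $j\neq i$'' is not literally correct, since those transcripts depend adaptively on the coordinate-$i$ answers. What you must fix is the \emph{randomness source} for the other coordinates (e.g., pre-sampled infinite coin sequences, or for the ``true'' blocks a pre-sampled $x^{(j)}\sim\mu_{y_j}$); only then does the residual process become a genuine decision tree on $x^{(i)}$. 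You should also note that this tree may have depth far exceeding $q-1$; the point is that beyond depth $q-1$ the two hybrids use the identical conditional rule, so by data processing the TV collapses to that of the depth-$(q-1)$ truncation. You flag both issues in your final paragraph, so the plan is sound, but be aware that these two corrections are exactly what the write-up needs, not peripheral polish.
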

See Section~\ref{prelim} for definitions of composition and various complexity measures of relations. Theorem~\ref{thm:composing} implies that if $g$ is a function that is hard to compute with error $1/2-1/n^4$, $f \circ g^n$ is hard to compute with error $1/3$.

In the special case where $f$ is a Boolean function, Theorem~\ref{thm:composing} implies that $\R_{1/3}(f\circ g^n) = \Omega(\R_{1/3}(f)\cdot\R_{1/2-1/n^4}(g))$, since the success probability of query algorithms for Boolean functions can be boosted from $5/9$ to $2/3$ by constantly many independent repetitions followed by taking a majority of the different outputs.

Theorem~\ref{thm:composing} is useful only when the function $g$ is hard against randomized query algorithms even for error $1/2-1/n^4$. In Section~\ref{directsum} we prove the following consequence of Theorem~\ref{thm:composing}.

Let $f \subseteq \{0,1\}^n \times \mathcal{R}$ be any relation. Let $g:\{0,1\}^m \rightarrow \{0,1\}$ be a function. Let $g_t^\oplus: \left(\{0,1\}^m\right)^t \rightarrow \{0,1\}$ be defined as follows: for $x=(x^{(1)}, \ldots, x^{(t)}) \in \left(\{0,1\}^m\right)^t$, $g_t^\oplus(x) = \oplus_{i=1}^tg(x^{(i)})$.
\begin{restatable}{thm}{xor}
\label{thm:xor}
\[\R_{1/3}\left(f \circ \left(g^\oplus_{O(\log n)}\right)^n\right)=\Omega(\log n \cdot \R_{4/9}(f) \cdot \R_{1/3}(g)).\]
\end{restatable}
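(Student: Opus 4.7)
The plan is to obtain Theorem~\ref{thm:xor} from Theorem~\ref{thm:composing} by combining it with a randomized XOR lemma that amplifies hardness of the inner function from error $1/3$ to error $1/2-1/n^4$. Writing $t = c\log n$ for a constant $c$ to be chosen and setting $h := g^\oplus_t$, Theorem~\ref{thm:composing} applied with inner function $h$ yields
\[\R_{1/3}(f\circ h^n) \;=\; \Omega\bigl(\R_{4/9}(f)\cdot\R_{1/2-1/n^4}(h)\bigr).\]
It therefore suffices to prove the inner bound
\[\R_{1/2-1/n^4}\bigl(g^\oplus_{c\log n}\bigr) \;=\; \Omega\bigl(\log n\cdot\R_{1/3}(g)\bigr);\]
substituting into the previous display and absorbing the constant $c$ into the $\Omega(\cdot)$ then gives Theorem~\ref{thm:xor}.

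The XOR lemma I would aim to establish has the form: letting $q := \R_{1/3}(g)$, any randomized query algorithm computing $g^\oplus_t$ with fewer than $\Omega(tq)$ queries succeeds with probability at most $1/2 + 2^{-\Omega(t)}$. Given this, choosing $c$ large enough that $2^{-\Omega(c\log n)} \leq 1/n^4$ immediately produces the inner bound above. To prove the lemma itself I would first invoke Yao's minimax principle to fix a hard distribution $\mu$ on $\{0,1\}^m$ under which every deterministic $q$-query algorithm for $g$ has bias at most $1/3$, and then analyze the distributional complexity of $g^\oplus_t$ under the product distribution $\mu^t$. A hybrid / hardcore-style argument, in the spirit of Yao's XOR lemma adapted to the query model, shows that an algorithm using only $o(tq)$ queries cannot invest enough queries into more than a small fraction of the $t$ coordinates, so on the remaining coordinates its prediction of $g(x^{(i)})$ has bias at most $1/3$; multiplicativity of bias under XOR of (conditionally) independent bits then shrinks the total bias to $(1/3)^{\Omega(t)} = 2^{-\Omega(t)}$, as required.

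The main obstacle is obtaining the XOR lemma with precisely these parameters, namely advantage that decays exponentially in $t$ while the query cost grows only linearly in $t$. Existing direct-product / hardcore theorems for randomized query complexity (for instance those of Drucker and of Shaltiel) live in this regime and can in principle be invoked essentially as a black box; the care required is to check that the hidden constants are compatible with pushing $t$ up to $\Theta(\log n)$ while still driving the advantage below $1/n^4$. Once the XOR lemma is established at this strength, Theorem~\ref{thm:xor} follows by the direct substitution described in the first paragraph.
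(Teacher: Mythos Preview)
Your proposal is correct and matches the paper's approach exactly: apply Theorem~\ref{thm:composing} with inner function $h=g^\oplus_t$, then invoke Drucker's XOR lemma (Theorem~\ref{druker}) to get $\R_{1/2-2^{-\Omega(t)}}(g^\oplus_t)=\Omega(t\cdot\R_{1/3}(g))$, and finally set $t=\Theta(\log n)$ so that $2^{-\Omega(t)}\le 1/n^4$. The only difference is that the paper simply cites Drucker's result as a black box rather than sketching a proof of the XOR lemma as you do.
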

Theorem~\ref{thm:xor} is proved by establishing, via an XOR lemma by Andrew Drucker \cite{DBLP:conf/coco/Drucker11}, that if $g$ is hard for error $1/3$ then $g_{O(\log n)}^\oplus$ is hard for error $1/2-1/n^4$.

Composition theorem for randomized query complexity has been an area of active research in the past. G{\"{o}}{\"{o}}s and Jayram \cite{DBLP:conf/coco/GoosJ16} showed a composition theorem for a constrained version of conical junta degree, which is a lower bound on randomized query complexity. Composition theorem for approximate degree (which also lower bounds randomized query complexity) for the special case of TRIBES function has seen a long line of research culminating in independent works of Sherstov \cite{DBLP:journals/toc/Sherstov13a} and Bun and Thaler \cite{DBLP:conf/icalp/BunT13} who settle the question by proving optimal bounds.

Composition theorem has been studied and shown in the context of communication and query complexities by the works of G{\"{o}}{\"{o}}s, Pitassi and Watson \cite{DBLP:conf/focs/GoosP015,DBLP:journals/corr/GoosP017}, Chattopadhyay et al. \cite{DBLP:journals/corr/ChattopadhyayKL17} when the function $g$ is the indexing function or the inner product function with large enough arity. The work of Hatami, Hosseini and Lovett \cite{DBLP:conf/focs/HatamiHL16} proves a composition theorem in the context of communication and parity query complexites when the function $g$ is the two-bit XOR function. Ben-David and Kothari \cite{DBLP:conf/icalp/Ben-DavidK16} proved a composition theorem for the \emph{sabotage complexity} of Boolean functions, a novel complexity measure defined in the same work that the authors prove to give quadratically tight bound on the randomized query complexity.

Composition theorems have also been successfully used in the past in constructing separating examples for various complexity measures, and bounding one complexity measure in terms of another. Kulkarni and Tal \cite{DBLP:journals/cjtcs/KulkarniT16} proved an upper bound on fractional block sensitivity in terms of degree by analyzing the behavior of fractional block sensitivity under function composition. Separation between block sensitivity and degree was obtained by composing Kushilevitz's icosadedron function repeatedly with itself (see \cite{DBLP:journals/tcs/BuhrmanW02}). Separation between parity decision tree complexity and Fourier sparsity has been obtained by O'Donnell et al. by studying the behavior of \emph{parity kill number} under function composition \cite{DBLP:conf/coco/ODonnellWZST14}.
\subsection{Our techniques}
\label{technique}
In this section, we give a high level overview of our proof of Theorem~\ref{thm:composing}. We refer the reader to Section~\ref{prelim} for formal definitions of composition and various complexity measures of relations.

Let $\epsilon=1/2-1/n^4$. Let $\mu$ be the distribution over the domain $\{0,1\}^m$ of $g$ for which $\R_\epsilon(g)$ is achieved, i.e., $\R_\epsilon(g)=\D^\mu_\epsilon(g)$ (see Fact~\ref{minmax}). For $b \in \{0,1\}$, let $\mu_b$ denote the distribution obtained by conditioning $\mu$ to the event that $g(x)=b$ (see Section~\ref{prelim} for a formal definition).

We show that for every probability distribution $\lambda$ over the domain $\{0,1\}^n$ of $f$, there exists a deterministic query algorithm $\mathcal{A}$ with worst case query complexity at most $\R_{1/3}(f \circ g^n)/\R_\epsilon(g)$, such that $\Pr_{z \sim \lambda} [(z,\mathcal{A}(z))\in f] \geq 5/9$. By the minimax principle (Fact~\ref{minmax}) this proves Theorem~\ref{thm:composing}.

Now using the distribution $\lambda$ over $\{0,1\}^n$ we define a probability distribution $\gamma$ over $\left(\{0,1\}^m\right)^n$. To define $\gamma$, we begin by defining a family of distributions $\{\gamma^z: z \in \{0,1\}^n\}$ over $\left(\{0,1\}^m\right)^n$. For a fixed $z=(z_1, \ldots, z_n) \in \{0,1\}^n$, we define $\gamma^z$ by giving a sampling procedure:
\begin{enumerate}
\item For each $i=1, \ldots, n$, sample $x^{(i)}=(x^{(i)}_1, \ldots, x^{(i)}_m)$ from $\{0,1\}^m$ independently according to $\mu_{z_i}$.
\item Return $x=(x^{(1)}, \ldots, x^{(n)})$.
\end{enumerate}
Thus for $z=(z_1, \ldots, z_n) \in \{0,1\}^n$ and $x=(x^{(1)}, \ldots, x^{(n)}) \in (\{0,1\}^m)^n$, $\gamma^z(x)=\Pi_{i=1}^n \mu_{z_i}(x^{(i)})$. Note that $\gamma^z$ is supported only on strings $x$ for which the following is true: for each $r \in \mathcal{R}$, $(x,r) \in f \circ g^n$ if and only if $(z,r) \in f$.

Having defined the distributions $\gamma^z$, we define the distribution $\gamma$ by giving a sampling procedure:
\begin{enumerate}
\item Sample a $z=(z_1, \ldots,z_n)$ from $\{0,1\}^n$ according to $\lambda$.
\item Sample an $x=(x^{(1)}, \ldots, x^{(n)})$ from $(\{0,1\}^m)^n$ according to $\gamma^z$. Return $x$.
\end{enumerate}

By minimax principle (Fact~\ref{minmax}), there is a deterministic query algorithm $\mathcal{B}$ of worst case complexity at most $\R_{1/3} (f \circ g^n)$ such that $\Pr_{x \sim \gamma}[(x,\mathcal{B}(x)) \in f \circ g^n] \geq 2/3$. We will use $\mathcal{B}$ to construct a randomized query algorithm $\mathcal{A'}$ for $f$ with the desired properties. A deterministic query algorithm $\mathcal{A}$ for $f$ with required performance guarantees can then be obtained by appropriately fixing the randomness of $\mathcal{A}'$.

See Algorithm~\ref{A'} for a formal description of $\mathcal{A'}$. Given an input $z=(z_1, \ldots, z_n), \mathcal{A}'$ simulates $\mathcal{B}$. Recall that an input to $\mathcal{B}$ is an $nm$ bit long string $(x^{(i)}_j)_{{i=1, \ldots, n}\atop{ j= 1, \ldots, m}}$. Whenever $\mathcal{B}$ asks for (queries) an input bit $x^{(i)}_j$, a response bit is appropriately generated and passed to $\mathcal{B}$. To generate a response to a query by $\mathcal{B}$, a bit in $z$ may be queried; those queries will contribute to the query complexity of $\mathcal{A}'$. The queries are addressed as follows. Let the simulation of $\mathcal{B}$ request bit $x_j^{(i)}$.
\begin{itemize}
\item If less than $\D_\epsilon^\mu(g)$ queries have been made into $x^{(i)}$ (including the current query) then a bit $b$ is sampled from the marginal distribution of $x_j^{(i)}$ according to $\mu$, conditioned on the responses to the past queries. $b$ is passed to the simulation of $\mathcal{B}$.
\item If $\D_\epsilon^\mu(g)$ queries have been made into $x^{(i)}$ (including the current query) then first the input bit $z_i$ is queried; then a bit $b$ is sampled from the marginal distribution of $x_j^{(i)}$ according to $\mu_{z_i}$, conditioned on the responses to the past queries. $b$ is passed to the simulation of $\mathcal{B}$.
\end{itemize}
The simulation of $\mathcal{B}$ continues until $\mathcal{B}$ terminates in a leaf. Then $\mathcal{A}'$ also terminates and outputs the label of the leaf.

We use Claims~\ref{R->bias} and~\ref{unbias} to prove that for a fixed $z \in \{0,1\}^n$, the probability distribution induced by $\mathcal{A}'$ on the leaves of $\mathcal{B}$ is statistically close to the probability distribution induced by $\mathcal{B}$ on its leaves for a random input from $\gamma^z$. Averaging over different $z$'s, the correctness of $\mathcal{A}'$ follows from the correctness of $\mathcal{B}$. The reader is referred to Section~\ref{mainpf} for the details.

\section{Preliminaries}
\label{prelim}
In this section, we define some basic concepts, and set up our notations. We begin with defining the 2-sided error randomized and distributional query complexity measures of relations. The relations considered in this work will all be between the Boolean hypercube $\{0,1\}^k$ of some dimension $k$, and an arbitrary set $\mathcal{S}$. The strings $x \in \{0,1\}^n$ will be called as inputs to the relation, and $\{0,1\}^n$ will be referred to as the \emph{input space} and the \emph{domain} of $h$.
\begin{definition}[2-sided Error Randomized Query Complexity]
Let $\mathcal{S}$ be any set. Let $h \subseteq \{0,1\}^k \times \mathcal{S}$ be any relation and $\epsilon \in [0,1/2)$. The 2-sided error randomized query complexity $\R_\epsilon(h)$ is the minimum number of queries made in the worst case by a randomized query algorithm $\mathcal A$ (the worst case is over inputs and the internal randomness of $\mathcal{A}$) that on each input $x \in \{0,1\}^k$ satisfies $\Pr[(x,\mathcal A(x)) \in h] \geq 1 - \epsilon$ (where the probability is over the internal randomness of $\mathcal{A}$). 
\end{definition}

\begin{definition}[Distributional Query Complexity]
 Let $h \subseteq \{0,1\}^k \times \mathcal{S}$ be any relation, $\mu$ a distribution on the input space $\{0,1\}^k$ of $h$, and $\epsilon \in [0,1/2)$. The distributional query complexity $\D^\mu_\epsilon(h)$ is the minimum number of queries made in the worst case (over inputs) by a deterministic query algorithm $\mathcal A$ for which $\Pr_{x \sim \mu}[(x,\mathcal A(x)) \in h] \geq 1 - \epsilon$.
\end{definition}

In particular, if $h$ is a function and $\mathcal{A}$ is a randomized or distributional query algorithm computing $h$ with error $\epsilon$, then $\Pr [h(x)=\mathcal{A}(x)] \geq 1-\epsilon$, where the probability is over the respective sources of randomness.

The following theorem is von Neumann's minimax principle stated for decision trees.
\begin{fact}[minimax principle]
\label{minmax}
For any integer $k$, set $\mathcal{S}$, and relation $h \subseteq \{0,1\}^k \times \mathcal{S}$,
\[\R_\epsilon(h)=\max_{\mu}\D_\epsilon^\mu(h).\]
\end{fact}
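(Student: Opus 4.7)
The plan is to prove the two inequalities separately. The direction $\R_\epsilon(h) \geq \max_\mu \D_\epsilon^\mu(h)$ is an averaging argument; the direction $\R_\epsilon(h) \leq \max_\mu \D_\epsilon^\mu(h)$ is where von Neumann's minimax theorem enters. Crucially, since the input space $\{0,1\}^k$ is finite, the set of deterministic query algorithms of any bounded depth is also finite (up to equivalence on inputs), so the classical finite-dimensional minimax theorem suffices.

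For the easy direction, fix an arbitrary distribution $\mu$ on $\{0,1\}^k$, and let $\mathcal{A}$ be an optimal randomized query algorithm achieving worst-case error $\epsilon$ with worst-case query complexity $q = \R_\epsilon(h)$. View $\mathcal{A}$ as a probability distribution over deterministic decision trees, each of depth at most $q$. Since $\Pr[(x,\mathcal{A}(x))\in h] \geq 1-\epsilon$ holds pointwise for every $x$, it also holds in expectation over $x \sim \mu$. Swapping the order of expectations, there must exist a particular tree $T$ in the support of $\mathcal{A}$ with $\Pr_{x\sim\mu}[(x,T(x)) \in h] \geq 1-\epsilon$. Since the depth of $T$ is at most $q$, this witnesses $\D_\epsilon^\mu(h) \leq q$, which proves the easy direction.

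For the hard direction, let $q = \max_\mu \D_\epsilon^\mu(h)$ and consider the finite two-player zero-sum game whose row player chooses a deterministic query algorithm $T$ of depth at most $q$, whose column player chooses an input $x \in \{0,1\}^k$, and whose payoff to the column player is the error indicator $\mathbbm{1}[(x,T(x)) \notin h]$. Mixed strategies for the row player are randomized algorithms of depth at most $q$, and mixed strategies for the column player are distributions $\mu$ on $\{0,1\}^k$. Von Neumann's minimax theorem gives
\[
\min_{\mathcal{A}} \max_x \Pr[(x,\mathcal{A}(x)) \notin h] \;=\; \max_\mu \min_T \Pr_{x\sim \mu}[(x,T(x)) \notin h].
\]
By the definition of $q$, for every $\mu$ there is some deterministic $T$ of depth at most $q$ with $\Pr_{x\sim\mu}[(x,T(x)) \notin h] \leq \epsilon$, so the right-hand side is at most $\epsilon$. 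Hence the left-hand side is at most $\epsilon$ as well, i.e.\ there exists a randomized algorithm of depth at most $q$ with worst-case error at most $\epsilon$, proving $\R_\epsilon(h) \leq q$.

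The only subtle point is the invocation of the minimax theorem: one must confirm that restricting the row player to the finite set of distinct depth-$q$ decision trees on $\{0,1\}^k$ does not change either side. This is immediate because any randomized algorithm achieving worst-case query complexity $q$ is supported on such trees, and the column player's strategy space is the finite simplex on $\{0,1\}^k$. With that in hand, the two inequalities combine to give $\R_\epsilon(h) = \max_\mu \D_\epsilon^\mu(h)$.
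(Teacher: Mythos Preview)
The paper does not prove this statement at all: it is stated as a \emph{Fact} (von Neumann's minimax principle specialized to decision trees) and invoked without argument. Your proposal supplies the standard Yao-style proof, and both directions are argued correctly.

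One small point worth tightening: you assert that the set of depth-$q$ deterministic decision trees on $\{0,1\}^k$ is finite, but as stated the relation has range $\mathcal{S}$, which the paper allows to be an arbitrary set. If $\mathcal{S}$ is infinite, there are infinitely many ways to label the leaves, and the finite minimax theorem does not apply directly. The fix is routine---since there are only $2^k$ inputs, one may without loss of generality restrict leaf labels to a finite set of at most $2^k$ ``useful'' outputs (for each $x$ pick one $r_x$ with $(x,r_x)\in h$ if it exists), as any other labeling can only do worse on every input---but you should say this explicitly rather than leave it implicit in the phrase ``up to equivalence on inputs.'' With that caveat addressed, the proof is complete.
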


Let $g:\{0,1\}^m \rightarrow \{0,1\}$ be a Boolean function. Let $\mu$ be a probability distribution on $\{0,1\}^m$ which intersects non-trivially both with $g^{-1}(0)$ and with $g^{-1}(1)$. For each $z \in \{0,1\}$, let $\mu_z$ be the distribution obtained by restricting $\mu$ to $g^{-1}(z)$. Formally,
\[\mu_z(x)=\left\{  \begin{array}{ll} $0$ & \mbox{if $g(x) \neq z$} \\
\frac{\mu(x)}{\sum_{y: g(y)=z} \mu(y)} & \mbox{if $g(x)=z$}\end{array}   \right.\]
Notice that $\mu_0$ and $\mu_1$ are defined with respect to some Boolean function $g$, which will always be clear from the context.

\begin{definition}[Subcube, Co-dimension]
A subset $\clC$ of $\{0,1\}^m$ is called a subcube if there exists a set $S \subseteq \{1, \ldots, m\}$ of indices and an \emph{assignment function} $A:S \rightarrow \{0,1\}$ such that $\clC=\{x \in \{0,1\}^m:\forall i \in S, x_i=A(i)\}$. The co-dimension $\codim(\clC)$ of $\clC$ is defined to be $|S|$. 
\end{definition}
Let $\clC \subseteq \{0,1\}^m$ be a subcube and $\mu$ be a probability distribution on $\{0,1\}^m$. We will often abuse notation and use $\clC$ to denote the event that a random string $x$ belongs to the subcube $\clC$. The probability $\Pr_{x \sim \mu}[x \in \clC]$ will be denoted by $\Pr_\mu[\clC]$. For subcubes $\clC_1$ and $\clC_2$, the conditional probability $\Pr_{x \sim \mu}[x \in \clC_2 \mid x \in \clC_1]$ will be denoted by $\Pr_\mu[\clC_2 \mid \clC_1]$.
\begin{definition}[Bias of a subcube]
Let $g:\{0,1\}^m \rightarrow \{0,1\}$ be a Boolean function. Let $\mu$ be a probability distribution over $\{0,1\}^m$. Let $\clC \subseteq \{0,1\}^m$ be a subcube such that $\Pr_{\mu}[\clC] > 0$. The bias of $\clC$ with respect to $\mu$, $\bias^\mu(\clC)$, is defined to be:
\[\bias^\mu(\clC)=|\Pr_{x \sim \mu}[g(x)=0 \mid x \in \clC ]-\Pr_{x \sim \mu}[g(x)=1 \mid x \in \clC ]|.\]
A Boolean function $g$ is implicit in the definition of bias, which will always be clear from the context.
\end{definition}
\begin{proposition}
\label{fullbias}
Let $g:\{0,1\}^m \rightarrow \{0,1\}$ be a Boolean function, and $\D_{\epsilon}^\mu(g) > 0$. Then,
\[\min_{b \in \{0,1\}}\{\Pr_{x \sim \mu}[g(x)=b]\} > \epsilon.\] In particular, $\bias^\mu(\{0,1\}^m) < 1-2\epsilon$.
\end{proposition}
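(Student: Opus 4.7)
The plan is to prove the contrapositive of the first inequality: I would assume that $\min_{b \in \{0,1\}} \Pr_{x \sim \mu}[g(x) = b] \leq \epsilon$ and exhibit a zero-query deterministic algorithm that computes $g$ with error at most $\epsilon$ under $\mu$, contradicting the hypothesis $\D_\epsilon^\mu(g) > 0$. Concretely, let $b^\star \in \{0,1\}$ be a value attaining the minimum, so that $\Pr_{x \sim \mu}[g(x) = b^\star] \leq \epsilon$. Consider the trivial algorithm $\mathcal{A}$ that makes no queries and always outputs $1 - b^\star$. Its error under $\mu$ is exactly $\Pr_{x \sim \mu}[g(x) = b^\star] \leq \epsilon$, so $\mathcal{A}$ witnesses $\D_\epsilon^\mu(g) = 0$. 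This contradicts the assumption and establishes the strict inequality $\min_b \Pr_{x \sim \mu}[g(x) = b] > \epsilon$.

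For the "in particular" part, I would write $p = \Pr_{x \sim \mu}[g(x) = 0]$ so that $\Pr_{x \sim \mu}[g(x) = 1] = 1 - p$, and observe
\[
\bias^\mu(\{0,1\}^m) = |p - (1-p)| = |1 - 2p| = 1 - 2\min\{p, 1-p\}.
\]
Combining with the first part, which yields $\min\{p, 1-p\} > \epsilon$, gives $\bias^\mu(\{0,1\}^m) < 1 - 2\epsilon$, as required.

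The argument has no real obstacle; the only thing to be careful about is ensuring that the strict inequality in the conclusion matches the (non-strict) assumption $\D_\epsilon^\mu(g) > 0$ correctly. Since $\D_\epsilon^\mu(g) > 0$ precisely rules out any zero-query algorithm of error at most $\epsilon$, the error of the constant algorithm must strictly exceed $\epsilon$, which is exactly the strict inequality claimed.
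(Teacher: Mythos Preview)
Your proof is correct and essentially identical to the paper's: both argue by contradiction (contrapositive) that if the minimum were at most $\epsilon$, the zero-query algorithm outputting the majority value would achieve error at most $\epsilon$, contradicting $\D_\epsilon^\mu(g) > 0$. Your explicit derivation of the bias bound via $|1-2p| = 1 - 2\min\{p,1-p\}$ is a detail the paper leaves implicit, but the arguments are the same.
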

\begin{proof}
Towards a contradiction, assume that $\min_{b \in \{0,1\}}\{\Pr_{x \sim \mu}[g(x)=b]\} \leq \epsilon$. Then, the algorithm that outputs  $\arg \max_{b \in \{0,1\}}\{\Pr_{x \sim \mu}[g(x)=b]\}$ makes 0 query and is correct with probability at least $1-\epsilon$. This contradicts the hypothesis that $\D_{\epsilon}^\mu(g) >0$.
\end{proof}
Now we define composition of two relations.
\begin{definition}[Composition of relations]
\label{def:comp}
Let $f \subseteq \{0,1\}^n \times \mathcal{R}$ and $g \subseteq \{0,1\}^m \times \{0,1\}$ be two relations. The composed relation $f \circ g^n \subseteq \left(\{0,1\}^m\right)^n \times \mathcal{R}$ is defined as follows: For $x=(x^{(1)}, \ldots, x^{(n)}) \in \left(\{0,1\}^m\right)^n$ and $r \in \mathcal{R}$, $(x,r) \in f \circ g^n$ if and only if there exists $b=(b^{(1)}, \ldots, b^{(n)}) \in \{0,1\}^n$ such that for each $i=1, \ldots, n$, $(x^{(i)},b^{(i)}) \in g$ and $(b,r) \in f$.
\end{definition}
We will often view a deterministic query algorithm as a binary decision tree. In each vertex $v$ of the tree, an input variable is queried. Depending on the outcome of the query, the computation goes to a child of $v$. The child of $v$ corresponding to outcome $b$ to the query made is denoted by $v_b$.
It is well known that the set of inputs that lead the computation of a decision tree to a certain vertex forms a subcube. We will denote the subcube corresponding to a vertex $v$ by $\clC_v$.

We next prove two claims about bias, probability and co-dimension of subcubes that will be useful. Claim~\ref{R->bias} states that for a function with large distributional query complexity, the bias of most shallow leaves of any deterministic query procedure is small.
\begin{claim}
\label{R->bias}
Let $g:\{0,1\}^m \rightarrow \{0,1\}$ be a Boolean function. Let $\epsilon \in [1/4, 1/2)$ and let $\delta=1/2-\epsilon$. Let $\mu$ be a probability distribution on $\{0,1\}^m$, and $\D_\epsilon^\mu(g)=c > 0$. Let $\mathcal{B}$ be any deterministic query algorithm for strings in $\{0,1\}^m$. For each $y \in \{0,1\}^m$, let $\ell_y$ be the unique leaf of $\mathcal{B}$ that contains $y$. Then,
\begin{enumerate}
\item[(a)] $\Pr_{y \sim \mu} [\mathsf{codim}(\ell_y) < c\mbox{\ and\ }\bias^\mu(\ell_y)\geq2\delta^{1/2}] < \delta^{1/2}.$
\item[(b)] For each $b \in \{0,1\}$, $\Pr_{y \sim \mu_b} [\mathsf{codim}(\ell_y) < c\mbox{\ and\ }\bias^\mu(\ell_y)\geq2\delta^{1/2}] < 4\delta^{1/2}$.
\end{enumerate}
\end{claim}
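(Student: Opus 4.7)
The plan is to exploit the distributional query complexity bound $\D^\mu_\epsilon(g) = c$ via a truncation argument. I would construct a tree $\mathcal{B}'$ from $\mathcal{B}$ by converting every vertex of $\mathcal{B}$ at depth exactly $c-1$ into a leaf (discarding its descendants), while retaining all original leaves of $\mathcal{B}$ of depth less than $c$. The resulting tree has worst-case depth at most $c-1$. Each leaf $\ell'$ of $\mathcal{B}'$ corresponds to a subcube, and I would label it by the majority value $\arg\max_{b \in \{0,1\}} \Pr_{x \sim \mu}[g(x) = b \mid x \in \ell']$, obtaining a deterministic algorithm for $g$ of worst-case depth at most $c-1 < c$.

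The next step is to translate depth into bias. A direct calculation gives $\Pr_{x \sim \mu}[\text{algorithm errs}] = \tfrac{1}{2}(1 - \bbE_{y \sim \mu}[\bias^\mu(\ell'_y)])$, where $\ell'_y$ denotes the leaf of $\mathcal{B}'$ containing $y$. Since $\D^\mu_\epsilon(g) = c$, no deterministic algorithm of worst-case depth $c-1$ achieves error at most $\epsilon$, so this error must exceed $\epsilon$. Rearranging gives $\bbE_{y \sim \mu}[\bias^\mu(\ell'_y)] < 1 - 2\epsilon = 2\delta$, and Markov's inequality then yields $\Pr_{y \sim \mu}[\bias^\mu(\ell'_y) \geq 2\delta^{1/2}] < \delta^{1/2}$. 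The crucial observation that concludes (a) is that whenever $\codim(\ell_y) < c$, the leaf $\ell_y$ survives the truncation intact, i.e.\ $\ell'_y = \ell_y$ and $\bias^\mu(\ell'_y) = \bias^\mu(\ell_y)$. Thus the event $\{\codim(\ell_y) < c \text{ and } \bias^\mu(\ell_y) \geq 2\delta^{1/2}\}$ is contained in $\{\bias^\mu(\ell'_y) \geq 2\delta^{1/2}\}$, whose $\mu$-probability is bounded by $\delta^{1/2}$.

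For part (b), I would reduce to (a) by a change of measure from $\mu_b$ to $\mu$. Proposition~\ref{fullbias} applies (since $c > 0$) and gives $\Pr_\mu[g(x) = b] > \epsilon \geq 1/4$ for both $b \in \{0,1\}$, so for any event $E$, $\Pr_{\mu_b}[E] \leq \Pr_\mu[E]/\Pr_\mu[g=b] < \Pr_\mu[E]/\epsilon \leq 4\,\Pr_\mu[E]$. Applying this bound to the event in (a) yields the claimed $4\delta^{1/2}$ bound. The step that requires the most care is the truncation construction and the identification of $\ell'_y$ with $\ell_y$ on the event $\codim(\ell_y) < c$; once this structural point is set up, the remainder is a routine averaging-plus-Markov argument followed by a crude change of measure.
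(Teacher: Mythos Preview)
Your proof is correct, and part~(b) is exactly the paper's argument. For part~(a), both you and the paper construct a depth-$(c-1)$ algorithm and exploit that it must have error strictly greater than $\epsilon$, but the executions differ. The paper argues by contradiction: it simulates $\mathcal{B}$ but also halts early whenever the current node's bias reaches $2\delta^{1/2}$, outputs the majority value on high-bias leaves and a uniformly random bit otherwise, and then shows (under the negated conclusion) that the success probability is at least $p\cdot(1/2+\delta^{1/2})+(1-p)\cdot 1/2 \geq 1/2+\delta = 1-\epsilon$, contradicting $\D^\mu_\epsilon(g)=c$. Your route is more direct: truncate only by depth, always output the majority value, use the exact identity $\Pr[\text{err}]=\tfrac12(1-\bbE_{y\sim\mu}[\bias^\mu(\ell'_y)])$ to get $\bbE_{y\sim\mu}[\bias^\mu(\ell'_y)]<2\delta$, and then apply Markov. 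Your version avoids the contradiction setup and the random-bit case; the paper's version makes the role of the threshold $2\delta^{1/2}$ slightly more explicit in the construction. Both yield the same bound with the same dependence on $\delta$.
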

In the above claim $\mathcal{B}$ could just be a deterministic procedure that makes queries and eventually terminates; whether or not it makes any output upon termination is not of any consequence here.
\begin{proof}
We first show that part (a) implies part (b). To this end, assume part (a) and fix a $b \in \{0,1\}$. Let $a(y)$ be the indicator variable for the event $\mathsf{codim}(\ell_y) < c$ and $\bias^\mu(\ell_y)\geq2\delta^{1/2}$. Thus, part (a) states that $\Pr_{y \sim \mu}[a(y)=1] < \delta^{1/2}$. Now,
\begin{align*}
&\Pr_{y \sim \mu_b} [\mathsf{codim}(\ell_y) < c\mbox{\ and\ }\bias^\mu(\ell_y)\geq2\delta^{1/2}] \\
&=\sum_{y:a(y)=1} \mu_b(y) \\
&=\frac{1}{\sum_{y :g(y)=b}\mu(y)}\sum_{y: a(y)=1} \mu(y)\mbox{\ \ (From the definition of $\mu_b$)} \\
& < \frac{1}{\epsilon} \Pr_{y \sim \mu}[a(y)=1] \mbox{\ \ (From Proposition~\ref{fullbias})}\\
&< 4\delta^{1/2}. \mbox{\ \ (By the hypothesis $\epsilon \geq 1/4$ and part (a))}
\end{align*}

We now prove part (a). Towards a contradiction assume that \[\Pr_{y \sim \mu} [\mathsf{codim}(\ell_y) < c\mbox{\ and\ }\bias^\mu(\ell_y)\geq2\delta^{1/2}] \geq \delta^{1/2}.\] Now consider the following decision tree algorithm $\mathcal{A}$ on $m$ bit strings:

Begin simulating $\mathcal{B}$. Let $\mathcal{C}$ be the subcube associated with the current node of $\mathcal{B}$ in the simulation. Simulate $\mathcal{B}$ unless one of the following happens.
\begin{itemize}
\item $\mathcal{B}$ terminates.
\item The number of queries made is $c-1$.
\item $\bias^\mu(\mathcal{C}) \geq 2\delta^{1/2}$.
\end{itemize}
Upon termination, if $\bias^\mu(\mathcal{C}) \geq 2\delta^{1/2}$, output $\arg \max_{b \in \{0,1\}} \Pr_{y \sim \mu}[g(y)=b \mid y \in \mathcal{C}]$. Else output a uniformly random bit.

It immediately follows that the worst case query complexity of $\mathcal{A}$ is at most $c-1$. Now, we will prove that $\Pr_{y \sim \mu}[\mathcal{A}(y)=g(y)] \geq 1-\epsilon$. This will contradict the hypothesis that $\D_\epsilon^\mu(g)=c$. Let $\mathcal{L}$ be the node of $\mathcal{B}$ at which the computation of $\mathcal{A}$ ends. Let $\Pr_{y \sim \mu}[\bias^\mu (\mathcal{L}) \geq 2 \delta^{1/2}]=p$. By our assumption, the probability (over $\mu$) that $\mathcal{L}$ is a leaf and $\bias^\mu (\mathcal{L}) \geq 2 \delta^{1/2}$ is at least $\delta^{1/2}$; in particular $p \geq \delta^{1/2}$. Now,
\begin{align*}
&\Pr_{y \sim \mu}[\mathcal{A}(y)=g(y)] \\
& =\Pr_{y \sim \mu}[\bias^\mu (\mathcal{L}) \geq 2 \delta^{1/2}]\cdot\Pr_{y \sim \mu}[\mathcal{A}(y)=g(y) \mid \bias^\mu (\mathcal{L}) \geq 2 \delta^{1/2}]+\\
&\qquad \qquad \qquad \Pr_{y \sim \mu}[\bias^\mu (\mathcal{L}) < 2 \delta^{1/2}]\cdot\Pr_{y \sim \mu}[\mathcal{A}(y)=g(y) \mid \bias^\mu (\mathcal{L}) < 2 \delta^{1/2}] \\
&\geq p \cdot (1/2+\delta^{1/2}) + (1-p).\frac{1}{2}\mbox{\ \  \ (from our assumption)}\\
&=1/2 + p\cdot \delta^{1/2}\\
&\geq 1/2+\delta\mbox{\ \ (since $p \geq \delta^{1/2}$)}\\
&=1-\epsilon.
\end{align*}
This completes the proof.
\end{proof}
The next claim states that if a subcube has low bias with respect to a distribution $\mu$, then the distributions $\mu_0$ and $\mu_1$ ascribe almost the same probability to it.
\begin{claim}
\label{unbias}
Let $g:\{0,1\}^m \rightarrow \{0,1\}$ be a Boolean function and $\delta \in (0,\frac{1}{2}]$. Let $\mu$ be a distribution on $\{0,1\}^m$. Let $\clC$ be a subcube such that $\Pr_{\mu} [ \clC]>0$ and $\bias^\mu(\clC) \leq \delta$. Also assume that $\bias^\mu(\{0,1\}^m) \leq \delta$. Then for any $b \in \{0,1\}$ we have,
\begin{enumerate}
\item[(a)] $\Pr_{\mu} [\clC] \leq (1+4\delta)\cdot\Pr_{\mu_b} [\clC],$
\item[(b)] $\Pr_{\mu} [\clC] \geq (1-4\delta)\cdot\Pr_{\mu_b} [ \clC]$.
\end{enumerate}
\end{claim}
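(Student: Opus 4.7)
The plan is to reduce both inequalities to a single elementary comparison of two numbers that the bias hypotheses force to be close. First I would unpack the definition of $\mu_b$ to write
\[
\Pr_{\mu_b}[\clC] \;=\; \frac{\Pr_\mu[\clC \cap g^{-1}(b)]}{\Pr_\mu[g^{-1}(b)]} \;=\; \frac{\Pr_{x\sim\mu}[g(x)=b \mid x\in\clC]}{\Pr_{x\sim\mu}[g(x)=b]}\cdot \Pr_\mu[\clC],
\]
so the entire claim boils down to controlling the ratio
\[
\rho \;:=\; \frac{\Pr_\mu[\clC]}{\Pr_{\mu_b}[\clC]} \;=\; \frac{\Pr_{x\sim\mu}[g(x)=b]}{\Pr_{x\sim\mu}[g(x)=b\mid x\in\clC]}.
\]

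Next I would use the two bias hypotheses. For any event $E$ with $\Pr_\mu[E]>0$, the probabilities $\Pr_\mu[g(x)=0\mid E]$ and $\Pr_\mu[g(x)=1\mid E]$ sum to $1$, so a bias bound of $\delta$ on $E$ is equivalent to each of these conditional probabilities lying in $[(1-\delta)/2,\,(1+\delta)/2]$. Applying this to $E=\{0,1\}^m$ and $E=\clC$, both the numerator and denominator of $\rho$ lie in that same interval. Consequently
\[
\frac{1-\delta}{1+\delta} \;\leq\; \rho \;\leq\; \frac{1+\delta}{1-\delta}.
\]

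Finally I would verify the two elementary inequalities $(1+\delta)/(1-\delta) \leq 1+4\delta$ and $(1-\delta)/(1+\delta) \geq 1-4\delta$ for $\delta \in (0,1/2]$; after clearing denominators both reduce to the nonnegative quadratic $2\delta(1-2\delta)\geq 0$ in $\delta$. Combining with the previous display yields $(1-4\delta)\Pr_{\mu_b}[\clC] \leq \Pr_\mu[\clC] \leq (1+4\delta)\Pr_{\mu_b}[\clC]$, which is exactly parts (a) and (b).

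There is no real obstacle in the argument; the only place requiring attention is that the bias hypotheses must be used for both $\clC$ and the whole cube $\{0,1\}^m$ (the latter is needed to bound the denominator of $\rho$ away from $0$ and to keep the numerator bounded above), and the conversion $(1\pm\delta)/(1\mp\delta) \mapsto 1\pm 4\delta$ uses $\delta\leq 1/2$ in an essential way. Once these are in hand the proof is a two-line computation.
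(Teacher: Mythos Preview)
Your proposal is correct and follows essentially the same route as the paper: both arguments write $\Pr_{\mu_b}[\clC]$ as $\Pr_\mu[\clC]$ times the ratio $\Pr_\mu[g=b\mid \clC]/\Pr_\mu[g=b]$, use the two bias hypotheses to confine numerator and denominator to $[(1-\delta)/2,(1+\delta)/2]$, and then convert $(1\pm\delta)/(1\mp\delta)$ into $1\pm 4\delta$ using $\delta\le 1/2$. One small inaccuracy: after clearing denominators, only the first inequality reduces to $2\delta(1-2\delta)\ge 0$; the second reduces to $2\delta(1+2\delta)\ge 0$, which is trivially nonnegative and in fact does not require $\delta\le 1/2$.
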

\begin{proof}
We prove part (a) of the claim. The proof of part (b) is similar.

By the definition of $\bias$ and the hypothesis, for each $b \in \{0,1\}$,
\begin{align}
\sum_{y \in \mathcal H_m: g(y)=b} \mu(y) \leq \left(\frac{1}{2}+\frac{\delta}{2}\right)\cdot \sum_{y \in \mathcal H_m} \mu(y)=\frac{1}{2}+\frac{\delta}{2}, \label{oneone}\\
\sum_{y \in \clC: g(y)=b} \mu(y) \geq \left(\frac{1}{2}-\frac{\delta}{2}\right) \cdot\sum_{y \in \clC} \mu(y)  > 0. \label{two}
\end{align}
Now,
\begin{align*}
&\Pr_{\mu_b} [\clC]=\sum_{y \in \clC} \mu_b(y) \\
&=\frac{ \sum_{y \in \clC:g(y)=b}\mu(y)}{\sum_{y \in \mathcal H_m : g(y)=b} \mu(y)} \\
& \geq \frac{(1/2 - \delta/2)\cdot\sum_{y \in \clC} \mu(y)}{1/2+\delta/2}  \mbox{\ \ \ (From Equations~(\ref{oneone}) and~(\ref{two})}\\
& = \frac{1/2 - \delta/2}{1/2+\delta/2} \cdot \Pr_{\mu} [\clC]
\end{align*}
Thus,
\begin{align*}
\Pr_{\mu} [\clC] &\leq \frac{1/2+\delta/2}{1/2-\delta/2} \cdot \Pr_{\mu_b} [\clC] \leq (1+4\delta) \cdot\Pr_{\mu_b} [\clC].\mbox{\ \ \ \ (since $\delta \leq \frac{1}{2})$}
\end{align*}
\end{proof}
\section{Composition Theorem}
\label{mainpf}
In this section we prove our main theorem. We restate it below.
\composing*
\begin{proof}We begin by recalling the notations defined in Section~\ref{technique} that we will use in this proof.

Let $\epsilon=1/2-1/n^4$. Let $\mu$ be the distribution over the domain $\{0,1\}^m$ of $g$ for which $\R_\epsilon(g)$ is achieved, i.e., $\R_\epsilon(g)=\D^\mu_\epsilon(g)$. (see Fact~\ref{minmax})

We show that for every probability distribution $\lambda$ over the input space $\{0,1\}^n$ of $f$, there exists a deterministic query algorithm $\mathcal{A}$ with worst case query complexity at most $\R_{1/3}(f \circ g)/\R_\epsilon(g)$, such that $\Pr_{z \sim \lambda} [(z,\mathcal{A}(z))\in f] \geq 5/9$. By the minimax principle (Fact~\ref{minmax}) this will prove Theorem~\ref{thm:composing}.

Using $\lambda$, we define a probability distribution $\gamma$ over $\left(\{0,1\}^m\right)^n$. We first define a family of distributions $\{\gamma^z: z \in \{0,1\}^n\}$ over $\left(\{0,1\}^m\right)^n$. For a fixed $z \in \{0,1\}^n$, we define $\gamma^z$ by giving a sampling procedure:
\begin{enumerate}
\item For each $i=1, \ldots, n$, sample $x^{(i)}=(x^{(i)}_1, \ldots, x^{(i)}_m)$ from $\{0,1\}^m$ independently according to $\mu_{z_i}$.
\item Return $x=(x^{(1)}, \ldots, x^{(n)})$.
\end{enumerate}
Thus for $z=(z_1, \ldots, z_n) \in \{0,1\}^n$ and $x=(x^{(1)}, \ldots, x^{(n)}) \in (\{0,1\}^m)^n$, $\gamma^z(x)=\Pi_{i=1}^n \mu_{z_i}(x^{(i)})$. Note that $\gamma^z$ is supported only on strings $x$ for which the following is true: for each $r \in \mathcal{R}$, $(x,r) \in f \circ g^n$ if and only if $(z,r) \in f$.

Now, we define the distribution $\gamma$ by giving a sampling procedure:
\begin{enumerate}
\item Sample a $z=(z_1, \ldots,z_n)$ from $\{0,1\}^n$ according to $\lambda$.
\item Sample an $x=(x^{(1)}, \ldots, x^{(n)})$ from $(\{0,1\}^m)^n$ according to $\gamma^z$. Return $x$.
\end{enumerate}

By the minimax principle (Fact~\ref{minmax}), there is a deterministic query algorithm $\mathcal{B}$ of worst case complexity at most $\R_{1/3} (f \circ g^n)$ such that $\Pr_{x \sim \gamma}[(x,\mathcal{B}(x)) \in f \circ g^n] \geq 2/3$. We will use $\mathcal{B}$ to construct a randomized query algorithm $\mathcal{A}'$ for $f$ with the desired properties. A deterministic query algorithm $\mathcal{A}$ for $f$ with required performance guarantees can then be obtained by appropriately fixing the randomness of $\mathcal{A}'$. Algorithm~\ref{A'} formally defines the algorithm $\mathcal{A}'$ that we construct.

\begin{algorithm}[!h]\label{A'}
\DontPrintSemicolon
\caption{Randomized query algorithm $\mathcal{A}'$ for $f$}
\KwIn{$z \in \{0,1\}^n$}
Initialize $v$ $\gets$ root of the decision tree $\clB$, $Q \gets \emptyset$ \;
\While{$v$ is not a leaf}
{ Let a bit in $x^{(i)}$ be queried at $v$ \;
  \If(\tcc*[f]{$\codim(\clC_v^{(i)}) < \D^\mu_\epsilon(g)$ if this is satisfied}){$i \not \in Q$} 
   {Set $v \gets v_b$ with probability $\Pr_\mu[\clC^{(i)}_{v_b}\mid\clC^{(i)}_v]$ \; \label{cond1}
   \If{$\codim(\clC_v^{(i)}) = \D^\mu_\epsilon(g)$}
    {Query $z_i$ \;
     Set $Q = Q \cup \{i\}$
    }
   }
   \Else
   {Set $v \leftarrow v_b$ with probability $\Pr_{\mu_{z_i}}[\clC^{(i)}_{v_b}\mid\clC^{(i)}_v]$ \label{cond2}
   }
 }
Output label of $v$.
\end{algorithm}

From the definition of $\bias$ one can verify that the events in steps~\ref{cond1} and~\ref{cond2} in Algorithm~\ref{A'} that are being conditioned on, have non-zero probabilities under the respective distributions; hence, the probabilistic processes are well-defined.

From the description of $\mathcal{A}'$ it is immediate that $z_i$ is queried only if the underlying simulation of $\mathcal{B}$ queries at least $\R_\epsilon(g)$ locations in $x^{(i)}$. Thus the worst-case query complexity of $\mathcal{A}'$ is at most $\R_{1/3} (f \circ g^n)/\R_\epsilon(g)$.

We are left with the task of bounding the error of $\mathcal{A}'$. Let $\mathcal{L}$ be the set of leaves of the decision tree $\mathcal{B}$. Each leaf $\ell \in \mathcal{L}$ is labelled with a bit $b_\ell \in \{0,1\}$; whenever the computation reaches $\ell$, the bit $b_\ell$ is output.

For a vertex $v$, let the corresponding subcube $\clC_v$ be $\clC_v^{(1)} \times \ldots \times \clC_v^{(n)}$, where $\clC_v^{(i)}$ is a subcube of the domain of the $i$-th copy of $g$ (corresponding to the input $x^{(i)}$). Recall from Section~\ref{prelim} that for $b \in \{0,1\}$, $v_b$ denotes the $b$-th child of $v$.

For each leaf $\ell \in \mathcal{L}$ and $i=1, \ldots, n$, define $\snip^{(i)}(\ell)$ to be $1$ if there is a node $t$ in the unique path from the root of $\mathcal{B}$ to $\ell$ such that $\codim(\clC_t^{(i)})  < \D_\epsilon^\mu(g)$ and $\bias^\mu (\clC_t^{(i)}) \geq \frac{2}{n^2}$. Define $\snip^{(i)}(\ell)=0$ otherwise. Define $\snip(\ell)=\vee_{i=1}^n \snip^{(i)}(\ell)$.

For each $\ell \in \mathcal{L}$, define $p_\ell^z$ to be the probability that for an input drawn from $\gamma^z$, the computation of $\mathcal{B}$ terminates at leaf $\ell$. We have,
\begin{align}
\label{std1}
\Pr_{x \sim \gamma^z}[(x,\mathcal{B}(x)) \in f \circ g^n] = \Pr_{x \sim \gamma^z}[(z,\mathcal{B}(x)) \in f]=\sum_{\ell \in \mathcal{L}: (z,b_\ell)\in f} p_\ell^z.
\end{align}
From our assumption about $\mathcal{B}$ we also have that,
\begin{align}
\label{std2}
\Pr_{x \sim \gamma}[(x,\mathcal{B}(x)) \in f \circ g^n]=\underset{z \sim\lambda}{\mathbb{E}}\Pr_{x \sim \gamma^z}[(x,\mathcal{B}(x)) \in f \circ g^n] \geq \frac{2}{3}.
\end{align}
Now, consider a run of $\mathcal{A}'$ on $z$. For each $\ell \in \mathcal{L}$ of $\mathcal{B}$, define $q_\ell^z$ to be the probability that the computation of $\mathcal{A}'$ on $z$ terminates at leaf $\ell$ of $\mathcal{B}$. Note that the probability is over the internal randomness of $\mathcal{A}'$.

To finish the proof, we need the following two claims. The first one states that the leaves $\ell \in \mathcal{L}$ are sampled with similar probabilities by $\mathcal{B}$ and $\mathcal{A}'$.
\begin{claim}
\label{simileaf}
For each $\ell \in \mathcal{L}$ such that $\snip(\ell)=0$, and for each $z \in \{0,1\}^n$, $\frac{8}{9}\cdot p_\ell^z \leq q_\ell^z \leq \frac{10}{9} \cdot p_\ell^z$.
\end{claim}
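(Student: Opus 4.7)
The plan is to express both $p_\ell^z$ and $q_\ell^z$ as products of transition probabilities along the unique root-to-$\ell$ path in $\clB$, observe that the ratio $q_\ell^z/p_\ell^z$ telescopes coordinate-wise to a product $\prod_i \Pr_\mu[\mathcal{D}_\ell^{(i)}]/\Pr_{\mu_{z_i}}[\mathcal{D}_\ell^{(i)}]$ for suitable handover subcubes $\mathcal{D}_\ell^{(i)}$ of codimension strictly less than $c:=\D^\mu_\epsilon(g)$, and then apply Claim~\ref{unbias} coordinate by coordinate.

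Concretely, since $x^{(1)},\ldots,x^{(n)}$ are independent under $\gamma^z$ with $x^{(i)}\sim\mu_{z_i}$, one has $p_\ell^z=\prod_{i=1}^n\Pr_{\mu_{z_i}}[\clC_\ell^{(i)}]$. For $q_\ell^z$, let $t_i$ be the number of coord-$i$ queries on the path and let $\mathcal{D}_\ell^{(i)}$ be the coord-$i$ subcube reached after the first $\min(t_i,c-1)$ of them. By the description of $\clA'$, the first $\min(t_i,c-1)$ coord-$i$ transitions are sampled using $\mu$, while all further coord-$i$ transitions (which can occur only after $z_i$ has been queried and $i$ added to $Q$) are sampled using $\mu_{z_i}$. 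Telescoping the two groups separately, the coord-$i$ contribution to $q_\ell^z$ equals $\Pr_\mu[\mathcal{D}_\ell^{(i)}]\cdot\Pr_{\mu_{z_i}}[\clC_\ell^{(i)}\mid\mathcal{D}_\ell^{(i)}]$, so
\[
\frac{q_\ell^z}{p_\ell^z}=\prod_{i=1}^n\frac{\Pr_\mu[\mathcal{D}_\ell^{(i)}]}{\Pr_{\mu_{z_i}}[\mathcal{D}_\ell^{(i)}]}.
\]
Each $\mathcal{D}_\ell^{(i)}$ sits on the root-to-$\ell$ path and has codimension $\min(t_i,c-1)<c$, so by $\snip(\ell)=0$ its bias under $\mu$ is below $2/n^2$. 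Together with $\bias^\mu(\{0,1\}^m)<1-2\epsilon=2/n^4\le 2/n^2$ from Proposition~\ref{fullbias}, Claim~\ref{unbias} with $\delta=2/n^2$ and $b=z_i$ yields $\Pr_\mu[\mathcal{D}_\ell^{(i)}]/\Pr_{\mu_{z_i}}[\mathcal{D}_\ell^{(i)}]\in[1-8/n^2,\,1+8/n^2]$. Multiplying these $n$ factors and using $(1\pm 8/n^2)^n=1\pm O(1/n)$, the ratio $q_\ell^z/p_\ell^z$ lies in $[8/9,10/9]$ for all large enough $n$, proving the claim.

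The main obstacle is setting up the telescoping so that the handover subcube $\mathcal{D}_\ell^{(i)}$ always has codimension strictly less than $c$---the regime where $\snip=0$ actually supplies a bias bound. The choice of $\min(t_i,c-1)$ is forced by the fact that $\clA'$ queries $z_i$ exactly at the transition from codim $c-1$ to codim $c$ and from that point on samples coord-$i$ bits from $\mu_{z_i}$; the cube reached just before the switch therefore has codimension at most $c-1$, one step shallower than the snip threshold, which is precisely what makes the argument go through. The remainder of the proof is routine telescoping and the elementary estimate $(1\pm 8/n^2)^n\to 1$.
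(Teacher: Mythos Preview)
Your proof is correct and follows essentially the same approach as the paper's: factor $p_\ell^z$ and $q_\ell^z$ coordinate-wise, identify for each $i$ the ``handover'' subcube where $\mathcal{A}'$ switches from $\mu$ to $\mu_{z_i}$, invoke Claim~\ref{unbias} with $\delta=2/n^2$ (the bias hypothesis coming from $\snip(\ell)=0$ and Proposition~\ref{fullbias}), and conclude via $(1\pm 8/n^2)^n\in[8/9,10/9]$ for large $n$. Your use of $\min(t_i,c-1)$ is a tidy way to phrase what the paper handles by saying the second conditional factor ``is interpreted as $1$ if $d^{(i)}<\D^\mu_\epsilon(g)$''; otherwise the arguments are line-for-line the same.
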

The next Claim states that for each $z$, the probability according to $\gamma^z$ of the leaves $\ell$ for which $\snip(\ell)=1$ is small.
\begin{claim}
\label{lilsnip}
\[\forall z \in \{0,1\}^n, \sum_{\ell \in \mathcal{L}, \snip(\ell)=1} p_\ell^z \leq \frac{4}{n}.\]
\end{claim}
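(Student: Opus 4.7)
My plan is to prove Claim~\ref{lilsnip} by a union bound over the $n$ copies, reducing each to a single application of Claim~\ref{R->bias}(b). Since $\sum_{\ell : \snip(\ell) = 1} p_\ell^z = \Pr_{x \sim \gamma^z}[\snip(\ell_x) = 1]$ where $\ell_x$ is the leaf of $\clB$ that contains $x$, and $\snip(\ell) = \bigvee_{i=1}^n \snip^{(i)}(\ell)$, it suffices to prove that for every $i \in \{1,\ldots,n\}$,
\[
\Pr_{x \sim \gamma^z}\!\bigl[\snip^{(i)}(\ell_x) = 1\bigr] < \tfrac{4}{n^2}.
\]

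To do so, I would fix an arbitrary assignment to the coordinates $x^{(j)}$ for $j \neq i$. Under this fixing, every query $\clB$ makes to a coordinate other than the $i$-th is answered deterministically, so $\clB$ collapses into a deterministic query algorithm $\clB_i$ on $\{0,1\}^m$ (the domain of $g$) that acts on $x^{(i)}$ alone. Crucially, along any root-to-leaf path of the original $\clB$, the subcube $\clC_t^{(i)}$ can be refined only by queries to $x^{(i)}$; hence the sequence of distinct subcubes $\clC_t^{(i)}$ appearing along the $\clB$-path for $x$ coincides with the sequence of subcubes along the $\clB_i$-path for $x^{(i)}$. In particular, $\snip^{(i)}(\ell_x)=1$ is equivalent to the $\clB_i$-path of $x^{(i)}$ passing through some node whose subcube has codimension less than $\D^\mu_\epsilon(g)$ and bias (with respect to $\mu$) at least $2/n^2$.

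To turn this path-property into a leaf-property, I would truncate $\clB_i$ to obtain $\tilde{\clB}_i$: stop the computation at the first node $t$ whose subcube satisfies $\codim(\clC_t) \geq \D^\mu_\epsilon(g)$ or $\bias^\mu(\clC_t) \geq 2/n^2$ (and otherwise follow $\clB_i$). Because codimension is monotone along a path, a leaf of $\tilde{\clB}_i$ that simultaneously satisfies $\codim < \D^\mu_\epsilon(g)$ and $\bias^\mu \geq 2/n^2$ is precisely a node witnessing $\snip^{(i)} = 1$ along the $\clB_i$-path, and conversely the first such witness on any $\clB_i$-path becomes a leaf of $\tilde{\clB}_i$. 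Under $\gamma^z$, the marginal of $x^{(i)}$ is $\mu_{z_i}$ and is independent of the other coordinates, so applying Claim~\ref{R->bias}(b) to $\tilde{\clB}_i$ with $\epsilon = 1/2 - 1/n^4$ (so $\delta^{1/2} = 1/n^2$ and $2\delta^{1/2} = 2/n^2$) bounds the desired probability by $4\delta^{1/2} = 4/n^2$. This bound is uniform over the fixing of the other coordinates, so it survives averaging. Summing over $i$ gives $\sum_{\ell : \snip(\ell)=1} p_\ell^z < 4/n$.

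The main obstacle is the reduction from the event "some node on the path is bad" to the event "the leaf of a decision tree is bad", which is exactly what the truncation $\tilde{\clB}_i$ accomplishes; once the truncation is set up correctly, Claim~\ref{R->bias}(b) applies immediately. It is also worth verifying $\epsilon = 1/2 - 1/n^4 \geq 1/4$ (holds for all $n \geq 2$, and the statement is vacuous for tiny $n$) and that we may assume $\D^\mu_\epsilon(g) > 0$, since otherwise Theorem~\ref{thm:composing} is trivial.
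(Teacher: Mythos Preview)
Your proposal is correct and follows essentially the same approach as the paper: a union bound over the $n$ coordinates, fixing the other $n-1$ blocks to reduce to a single-input decision tree, and then invoking Claim~\ref{R->bias}(b) with $\delta^{1/2}=1/n^2$. The only cosmetic differences are that the paper picks one fixing via an averaging argument (``there exists a choice of $x^{(j)}$ for $j\neq i$ with conditional probability at least $p$'') rather than bounding every fixing as you do, and that the paper leaves the passage from a path-event to a leaf-event implicit, whereas you spell it out via the truncated tree $\tilde{\clB}_i$; neither difference changes the substance of the argument.
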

We first finish the proof of Theorem~\ref{thm:composing} assuming Claims~\ref{simileaf} and~\ref{lilsnip}, and then prove the claims. For a fixed input $z \in \{0,1\}^n$, the probability that $\mathcal{A}'$, when run on $z$, outputs an $r$ such that $(z,r) \in f$, is at least
\begin{align}
&\sum_{{\ell \in \mathcal{L},}\atop{(z,b_\ell)\in f, \snip(\ell)=0}} q_\ell^z \geq \sum_{{\ell \in \mathcal{L},}\atop{(z,b_\ell)\in f, \snip(\ell)=0}}\frac{8}{9} \cdot p_\ell^z\mbox{\ \ (By Claim~\ref{simileaf})} \nonumber \\
&=\frac{8}{9}\left(\sum_{{\ell \in \mathcal{L},}\atop{(z,b_\ell)\in f}} p_\ell^z-\sum_{{\ell \in \mathcal{L},}\atop{(z,b_\ell)\in f, \snip(\ell)=1}} p_\ell^z\right) \nonumber \\
& \geq \frac{8}{9}\left(\sum_{{\ell \in \mathcal{L},}\atop{(z,b_\ell)\in f}} p_\ell^z-\frac{4}{n} \right). \mbox{\ \ (By Claim~\ref{lilsnip})}\label{success}
\end{align}
Thus, the success probability of $\mathcal{A}'$ is at least
\begin{align*}
\underset{z \sim \lambda}{\mathbb{E}} \sum_{{\ell \in \mathcal{L},}\atop{(z,b_\ell)\in f, \snip(\ell)=0}} q_\ell^z &\geq \frac{8}{9}  \cdot \left(\underset{z \sim \lambda}{\mathbb{E}}\sum_{{\ell \in \mathcal{L},}\atop{(z,b_\ell)\in f}} p_\ell^z-\frac{4}{n} \right) \mbox{\ \ (By Equation~(\ref{success}))} \\
&\geq \frac{8}{9} \cdot \left(\frac{2}{3}-\frac{4}{n}\right) \mbox{\ \ (By Equations~(\ref{std1}) and~(\ref{std2}))} \\
&\geq \frac{5}{9}. \mbox{\ \ (For large enough $n$)}
\end{align*}
We now give the proofs of Claims~\ref{simileaf} and~\ref{lilsnip}.
\begin{proof}[Proof of Claim~\ref{simileaf}]
We will prove the first inequality. The proof of the second inequality is similar\footnote{Note that only the first inequality is used in the proof of Theorem~\ref{thm:composing}.}.

Fix a $z \in \{0,1\}^n$ and a leaf $\ell \in \mathcal{L}$. For each $i=1, \ldots, n$, assume that $\codim(\clC_\ell^{(i)})=d^{(i)}$, and in the path from the root of $\mathcal{B}$ to $\ell$ the variables $x^{(i)}_1, \ldots, x^{(i)}_{d^{(i)}}$ are set to bits $b_1, \ldots, b_{d^{(i)}}$ in this order. The computation of $\mathcal{A}'$ terminates at leaf $\ell$ if the values of the different bits $x^{(i)}_j$ sampled by $\mathcal{A}'$ agree with the leaf $\ell$. The probability of that happening is given by
\begin{align}
q_\ell^z=&\prod_{i=1}^n \Pr_{\mathcal{A}'}[x^{(i)}_1=b_1, \ldots, x^{(i)}_{d^{(i)}}=b_{d^{(i)}}\mid z] \\
&=\prod_{i=1}^n \Pr_{x \sim \mu}[x^{(i)}_1=b_1, \ldots, x^{(i)}_{\D^\mu_\epsilon(g)-1}=b_{\D^\mu_\epsilon(g)-1}]  \cdot \nonumber \\
&\qquad \qquad \qquad \Pr_{x \sim \mu_{z_i}}[x^{(i)}_{\D^\mu_\epsilon(g)}=b_{\D^\mu_\epsilon(g)}, \ldots, x^{(i)}_{d^{(i)}}=b_{d^{(i)}} \mid x^{(i)}_1=b_1, \ldots, x^{(i)}_{\D^\mu_\epsilon(g)-1}=b_{\D^\mu_\epsilon(g)-1}]. \label{one}
\end{align}
The second equality above follows from the observation that in Algorithm~\ref{A'}, the first $\D_\epsilon^\mu(g)-1$ bits of $x^{(i)}$ are sampled from their marginal distributions with respect to $\mu$, and the subsequent bits are sampled from their marginal distributions with respect to $\mu_{z_i}$. In equation~(\ref{one}), the term $\Pr_{x \sim \mu_{z_i}}[x^{(i)}_{\D^\mu_\epsilon(g)}=b_{\D^\mu_\epsilon(g)}, \ldots, x^{(i)}_{d^{(i)}}=b_{d^{(i)}} \mid x^{(i)}_1=b_1, \ldots, x^{(i)}_{\D^\mu_\epsilon(g)-1}=b_{\D^\mu_\epsilon(g)-1}]$ is interpreted as $1$ if $d^{(i)}<\D^\mu_\epsilon(g)$.

We invoke Claim~\ref{unbias}(b) with $\clC$ set to the subcube $\{x \in \{0,1\}^m:x^{(i)}_1=b_1, \ldots, x^{(i)}_{\D^\mu_\epsilon(g)-1}=b_{\D^\mu_\epsilon(g)-1}\}$ and $\delta$ set to $\frac{2}{n^2}$. To see that the claim is applicable here, note that from the assumption $\snip(\ell)=0$ we have that $\bias(\clC) < \delta=\frac{2}{n^2} < \frac{1}{2}$, where the last inequality holds for large enough $n$. Also, since $\D_\epsilon^\mu(g)>0$, by Proposition~\ref{fullbias} the bias of $\{0,1\}^m$ is at most $\frac{2}{n^4}<\frac{2}{n^2}=\delta$. Continuing from Equation~(\ref{one}), by invoking Claim~\ref{unbias}(b) we have,
\begin{align*}
q_\ell^z \geq &\prod_{i=1}^n (1-8/n^2) \Pr_{x \sim \mu_{z_i}}[x^{(i)}_1=b_1, \ldots, x^{(i)}_{\D^\mu_\epsilon(g)-1}=b_{\D^\mu_\epsilon(g)-1}]  \cdot \nonumber \\
&\qquad \qquad \qquad \Pr_{x \sim \mu_{z_i}}[x^{(i)}_{\D^\mu_\epsilon(g)}=b_{\D^\mu_\epsilon(g)}, \ldots, x^{(i)}_{d^{(i)}}=b_{d^{(i)}} \mid x^{(i)}_1=b_1, \ldots, x^{(i)}_{\D^\mu_\epsilon(g)-1}=b_{\D^\mu_\epsilon(g)-1}] \\
&=(1-8/n^2)^n \prod_{i=1}^n \Pr_{x \sim \mu_{z_i}}[x^{(i)}_1=b_1, \ldots, x^{(i)}_{d^{(i)}}=b_{d^{(i)}}] \\
&\geq \frac{8}{9} \cdot p_\ell^z .\mbox{\ \ (For large enough $n$)}
\end{align*}
\end{proof}
\begin{proof}[Proof of Claim~\ref{lilsnip}]
Fix a $z \in \{0,1\}^n$. We shall prove that for each $i$, $\sum_{\ell \in \mathcal{L}, \snip^{(i)}(\ell)=1} p_\ell^z \leq \frac{4}{n^2}$. That will prove the claim, since $\sum_{\ell \in \mathcal{L}, \snip(\ell)=1} p_\ell^z \leq \sum_{i=1}^n \sum_{\ell \in \mathcal{L}, \snip^{(i)}(\ell)=1} p_\ell^z$.

To this end, fix an $i \in \{1, \ldots, n\}$. For a random $x$ drawn from $\gamma^z$, let $p$ be the probability that in strictly less than $\D_\epsilon^\mu(g)$ queries the computation of $\mathcal{B}$ reaches a node $t$ such that $\bias(\clC_t^{(i)})$ is at least $\frac{2}{n^2}$. Note that this probability is over the choice of the different $x^{(j)}$'s. We shall show that $p \leq\frac{4}{n^2}$. This is equivalent to showing that $\sum_{\ell \in \mathcal{L}, \snip^{(i)}(\ell)=1} p_\ell^z \leq \frac{4}{n^2}$.

Note that each $x^{(j)}$ is independently distributed according to $\mu_{z_j}$. By averaging, there exists a choice of $x^{(j)}$ for each $j \neq i$ such that for a random $x^{(i)}$ chosen according to $\mu_{z_i}$, a node $t$ as above is reached within at most $\D^\mu_\epsilon(g)-1$ steps with probability at least $p$. Fix such a setting for each $x^{(j)}$, $j\neq i$. Claim~\ref{lilsnip} follows from Claim~\ref{R->bias} (note that $\epsilon=\frac{1}{2}-\frac{1}{n^4} \geq \frac{1}{4}$ for large enough $n$).
\end{proof}
This completes the proof of Theorem~\ref{thm:composing}. 
\end{proof}
\subsection{Hardness Amplification Using XOR Lemma}
\label{directsum}
In this section we prove Theorem~\ref{thm:xor}.

Theorem~\ref{thm:composing} is useful only when the function $g$ is hard against randomized query algorithms even for error $1/2-1/n^4$. In this section we use an XOR lemma to show a procedure that, given any $g$ that is hard against randomized query algorithms with error $1/3$, obtains another function on a slightly larger domain that is hard against randomized query algorithms with error $1/2-1/n^4$. This yields the proof of Theorem~\ref{thm:xor}.

Let $g:\{0,1\}^m \rightarrow \{0,1\}$ be a function. Let $g_t^\oplus: \left(\{0,1\}^m\right)^t \rightarrow \{0,1\}$ be defined as follows. For $x=(x^{(1)}, \ldots, x^{(t)}) \in \left(\{0,1\}^m\right)^t$,
\[g_t^\oplus(x) = \oplus_{i=1}^tg(x^{(i)}).\]
The following theorem is obtained by specializing Theorem $3$ of Andrew Drucker's paper \cite{DBLP:conf/coco/Drucker11} to this setting.
\begin{theorem}[Drucker 2011 \cite{DBLP:conf/coco/Drucker11} Theorem $3$]
\label{druker}
\label{ampl}
\[\R_{1/2-2^{-\Omega(t)}} (g_t^\oplus)=\Omega(t \cdot \R_{1/3}(g)).\]
\end{theorem}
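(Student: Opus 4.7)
The plan is to establish the XOR lemma via Impagliazzo's hard-core lemma paradigm adapted to decision trees. Set $T = \R_{1/3}(g)$, and by Fact~\ref{minmax} fix a distribution $\mu$ on $\{0,1\}^m$ achieving $\D^\mu_{1/3}(g) = T$. By another application of minimax, it suffices to show that every deterministic decision tree $\mathcal{B}$ of depth $q = o(tT)$ computes $g_t^\oplus$ correctly under $\mu^{\otimes t}$ with advantage $\epsilon \leq 2^{-\Omega(t)}$.

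\textbf{Step 1 (Hard-core lemma).} I first prove a query-model analogue of Impagliazzo's hard-core lemma: there exist constants $\alpha, \gamma \in (0,1)$ with $\gamma < 1/2$, a threshold $q_0 = \Omega(T)$, and a density $h:\{0,1\}^m \to [0,1]$ with $\bbE_\mu h \geq \alpha$, such that on $\nu := h\mu/\bbE_\mu h$ every decision tree of depth $\leq q_0$ predicts $g$ with advantage at most $\gamma$. The standard LP-duality argument goes through in this setting: if no such density existed, one could combine low-depth trees having noticeable $\nu$-advantage (for every candidate $h$) into a depth-$O(q_0/\gamma^2)$ tree computing $g$ to error $< 1/3$ on all of $\mu$, contradicting $\D^\mu_{1/3}(g) = T$ provided $q_0 \leq c_0 T$ for a small constant $c_0$.

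\textbf{Step 2 (XOR amplification via Fourier).} Decompose $\mu = \alpha\nu + (1-\alpha)\nu'$ with $\nu' := (1-h)\mu/(1-\bbE_\mu h)$, and expand $\mu^{\otimes t} = \sum_{S \subseteq [t]} \alpha^{|S|}(1-\alpha)^{t-|S|}\,\nu_S$, where under $\nu_S$ coordinates in $S$ are drawn from $\nu$ and the rest from $\nu'$. Write the advantage as
\[
2\epsilon = \sum_{S \subseteq [t]} \alpha^{|S|}(1-\alpha)^{t-|S|}\,\bbE_{\nu_S}\bigl[(-1)^{\mathcal{B}(x)\,+\,\oplus_i g(x^{(i)})}\bigr].
\]
The key technical claim is $\bigl|\bbE_{\nu_S}[\cdots]\bigr| \leq (2\gamma)^{|S|}$, proved by iteratively peeling off a coordinate $i \in S$: conditioning on the queries $\mathcal{B}$ makes to the other coordinates reduces $\mathcal{B}$ to an adaptive tree on $x^{(i)}$ alone of depth $\leq q_0$ (subject to the amortization below), and by Step~1 its Fourier correlation with $g(x^{(i)})$ under $\nu$ is at most $2\gamma$; iterating gives the claimed product bound. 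Summing the mixture yields $2\epsilon \leq (1 - \alpha + 2\alpha\gamma)^t$, and since $\gamma < 1/2$ the base is strictly below $1$, producing $\epsilon \leq 2^{-\Omega(t)}$.

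\textbf{Main obstacle.} The delicate point is the query-budget accounting in Step~2: a depth-$q$ tree on $g_t^\oplus$ can concentrate all $q$ queries on one coordinate, so the per-coordinate tree obtained after conditioning need not satisfy the hard-core lemma's $q_0$-depth hypothesis if we only assume $q \leq q_0$ globally. To push the final bound all the way up to $q = \Theta(tT)$, I will amortize: along every root-to-leaf path of $\mathcal{B}$, at most $q/q_0 = O(t)$ coordinates can exceed the $q_0$ codimension budget, so for any marking $S$ of size $\Omega(t)$ at least $\Omega(t)$ of the $S$-coordinates still satisfy the codimension bound and contribute a factor of $2\gamma$; the remaining coordinates are absorbed using the trivial bound $1$. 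The product is then at most $(2\gamma)^{\Omega(t)} = 2^{-\Omega(t)}$, preserving the exponential suppression. Making this amortization interact cleanly with the adaptive branching of $\mathcal{B}$ and the random marking $S$, and integrating it into the Fourier-peeling induction, is the main technical work of the proof.
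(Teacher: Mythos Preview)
The paper does not prove this statement at all: it simply quotes it as a specialization of Theorem~3 from Drucker's 2011 paper~\cite{DBLP:conf/coco/Drucker11} and uses it as a black box in deriving Theorem~\ref{thm:xor}. So there is no in-paper proof against which to compare your argument.

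As for the substance of your outline: the hard-core-lemma route to XOR lemmas is classical and can be made to work in the query model, but your sketch stops precisely where the real difficulty lies. The product bound $\bigl|\bbE_{\nu_S}[(-1)^{\mathcal{B}(x)+\oplus_i g(x^{(i)})}]\bigr|\le(2\gamma)^{|S|}$ is obtained in the circuit setting by conditioning on all coordinates outside the one being peeled; in the decision-tree setting this conditioning yields, for each fixing of the other inputs, a tree on $x^{(i)}$ whose depth may be as large as the full budget $q$, not $q_0$. Your amortization observation---that on any single root-to-leaf path at most $q/q_0$ coordinates receive more than $q_0$ queries---is correct, but it is a \emph{per-path} statement, whereas the peeling induction is a \emph{per-coordinate} statement after averaging over all other coordinates. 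These two do not mesh automatically: the set of ``over-budget'' coordinates is itself a random variable depending on the leaf reached, and the hybrid you need must track this adaptively. You have identified the obstacle but not resolved it; as written, Step~2 is a plan rather than a proof. If you want to complete this route, one workable strategy is to first truncate $\mathcal{B}$ so that along every path each coordinate receives at most $q_0$ queries (paying for the truncation in the advantage via a union bound over the $\le q/q_0$ coordinates that get cut short), and only then run the peeling argument on the truncated tree; this makes the per-coordinate depth bound hold deterministically and the $(2\gamma)^{|S|}$ estimate goes through cleanly.
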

Theorem~\ref{thm:xor} (restated below) follows by setting $t=\Theta(\log n)$ and combining Theorem~\ref{druker} with Theorem~\ref{thm:composing}.
\xor*
\paragraph{Acknowledgements:} This work was partially supported by the National Research Foundation, including under NRF RF Award No. NRF-NRFF2013-13, the Prime Minister’s Office, Singapore and the Ministry of Education, Singapore under the Research Centres of Excellence programme and by Grant No. MOE2012-T3-1- 009.\par D.G. is partially funded by the grant P202/12/G061 of GA \v CR and by RVO:\ 67985840. M. S. is partially funded by the ANR Blanc program under contract ANR-12-BS02-005 (RDAM project).
\bibliographystyle{plain}
\bibliography{ref}
\end{document}